\documentclass{article}
\usepackage[letterpaper, right=1in, left=1in, top=1in, bottom=1in]{geometry}

\usepackage{graphicx} 
\usepackage{amsmath,amssymb,amsfonts,amsthm}
\usepackage{cite}
\usepackage{xcolor}
\usepackage{tikz}
\usepackage{multirow}
\usepackage[title]{appendix}


\newcommand{\R}{\mathbb{R}}
\newcommand{\N}{\mathbb{N}}
\newcommand{\bmtx}{\begin{bmatrix}}
\newcommand{\emtx}{\end{bmatrix}}
\newcommand{\bsmtx}{\left[\begin{smallmatrix}}
\newcommand{\esmtx}{\end{smallmatrix}\right]}
\newcommand{\LTV}{\mathrm{LTV}}
\newcommand{\SF}{\mathrm{SF}}
\DeclareRobustCommand{\svdots}{
  \vbox{%
    \baselineskip=0.2\normalbaselineskip
    \lineskiplimit=0pt
    \hbox{.}\hbox{.}\hbox{.}%
    \kern-0.2\baselineskip
  }%
}
\newtheorem{theorem}{Theorem}
\newtheorem{lemma}{Lemma}
\newtheorem{definition}{Definition}

\newtheorem{corollary}{Corollary}

\title{\LARGE \bf Safety Filter for Robust Disturbance Rejection via Online Optimization}

\author{Joyce Lai and Peter Seiler
	\thanks{J. Lai and P. Seiler are with the Department of Electrical Engineering and Computer Science at the University of Michigan, Ann Arbor, MI 48109, USA. \tt\small \{joycelai,pseiler\}@umich.edu
    }
}

\date{April 2025}

\begin{document}
\maketitle

\begin{abstract}
Disturbance rejection in high-precision control applications can be significantly improved upon via online convex optimization (OCO).  This includes classical techniques such as recursive least squares (RLS) and more recent, regret-based formulations.  However, these methods can cause instabilities in the presence of model uncertainty.  This paper introduces a safety filter for systems with OCO in the form of adaptive finite impulse response (FIR) filtering to ensure robust disturbance rejection.  The safety filter enforces a robust stability constraint on the FIR coefficients while minimally altering the OCO command in the $\infty$-norm cost.  Additionally, we show that the induced $\ell_\infty$-norm allows for easy online implementation of the safety filter by directly limiting the OCO command.  The constraint can be tuned to trade off robustness and performance.  We provide a simple example to demonstrate the safety filter.
\end{abstract}

\section{Introduction}
\label{sec:intro}

This paper presents a safety filter for robust disturbance rejection via online optimization.  Online convex optimization (OCO) is a broad set of methods that can be used for disturbance rejection.  This includes classical techniques such as recursive least squares (RLS) (Section 2.2 of \cite{astrom94} or Section 9.4 of \cite{hayes96}) and other variants \cite{tsao94,orzechowski08,jiang95}. It also includes more recent regret-based formulations \cite{anava14OCO,hazan11OML,zinkevich03OCP,goel2023best,agarwal19}.  This is especially relevant in high-precision control applications such as satellite pointing where moving physical components cause disturbances that are neither purely stochastic nor worst case \cite{thiele23,orzechowski08}.  In these applications, $H_2$ and $H_\infty$ can incorporate known disturbance characteristics through the use of disturbance filters.  However, the disturbance spectrum is often unknown at the time of design and, in these situations the $H_2$ and $H_\infty$ controllers will have conservative performance.  Instead, OCO is used to learn the disturbance characteristics and compute a control command to reject the disturbance.  However, the OCO is typically designed assuming perfect knowledge of the plant dynamics.  This can lead to instability when there are small amounts of model uncertainty resulting in unsafe operating conditions.

In the realm of safety critical control, a popular method of encoding safety constraints is by use of the control barrier function (CBF).  This is relevant in autonomous vehicle and robotic applications where safety is tied to obstacle avoidance.  These kinds of safety constraints can be accounted for by defining a safe region and constructing a corresponding CBF.  The CBF effectively defines the set of safe control inputs that keep the system from entering unsafe regions.  This can be implemented as a safety filter which minimally alters the baseline control input while imposing the CBF as a point wise in time constraint \cite{ames17,ames19}.  Additional works on robust CBFs account for model uncertainties \cite{buch22}.

Our work focuses on designing a safety filter which can be implemented online for uncertain systems with OCO.  We start with a motivating example where RLS is used for adaptive disturbance rejection.  In this example, uncertainty causes the system to go unstable.  This motivates the need for the safety filter design.  We then describe a more general framework for systems with OCO  which are subject to disturbance and uncertainty.  Specifically, we consider the class of OCO that takes form as an adaptive FIR filter with time-varying coefficients.  The safety filter has two competing objectives: robust stability and disturbance rejection performance. This combines robust control techniques and CBF methods for safety critical control.

Our main contributions are the following. First, we use a scaled small gain condition and induced $\ell_\infty$-norm bounding property (Theorem 1 and Lemma 2 from \cite{lai24}, respectively) to define a safe (i.e. stable) set of FIR coefficients.  The safe set is defined by a bound on the adaptive FIR filter that satisfies the scaled small gain (i.e. robust stability) condition.  Second, we formulate the safety filter as a constrained minimization problem which computes the signal that minimally alters the unconstrained FIR filter output and restricts the FIR coefficients to the set of stable gains point wise in time.  Note that we use robust control techniques to encode safety via FIR coefficient constraints rather than constructing a CBF.  However, we use the minimal perturbation method from CBF literature to design the safety filter.  Third, we provide an explicit solution to the constrained minimization problem which can easily be implemented online without explicitly computing the optimal FIR coefficients.  Lastly, we revisit the motivating example to demonstrate that the safety filter ensures both robust stability and disturbance rejection.

\textit{Notation:}  Let $\N_+$ and $\R^n$ denote the set of nonnegative integers and real $n\times 1$ vectors, respectively. Discrete-time signals are given by vector-valued sequences, $u:\N_+\to\R^n$, where $u_t \in \R^{n}$ is the value at time $t$. The $\ell_p$-norm of a signal $u$ is defined as: $\|u\|_p = \left( \sum_{t=0}^\infty \|u_t\|_p^p \right)^{1/p}$
where $\|u_t\|_p = \left( \sum_{i=1}^n |u_t(i)|^p \right)^{1/p}$ is the vector $p$-norm, and $u_t(i)$ is the $i^\text{th}$ entry of $u_t$.
Let $\ell_p^n$ denote the set of signals with finite $\ell_p$-norms, i.e. $\ell_p^n = \{u:\|u\|_p<\infty\}$.  The superscript $n$ is used to denote the dimension of the signal at any given time but may be dropped for simplicity.  Let the set $\ell_{pe}^n\subset\ell_p^n$ denote the subset of signals which have a finite $\ell_p$-norm on all finite time intervals, i.e. $\ell_{pe}^n = \{u:\sum_{t=0}^T \|u_t\|_p^p<\infty,\,\forall\,T\in\N_+\}$.  We refer to $\ell_p^n$ and $\ell_{pe}^n$ as the signal space and extended signal space, respectively.  Let $G:\ell_{pe}^n\to\ell_{pe}^m$ denote systems that map input signals $u\in\ell_{pe}^n$ to output signals $v\in\ell_{pe}^m$.  The induced $\ell_p$-norm of $G$ is defined as: $\|G\|_{p\to p} = \sup_{0\neq u\in\ell_p} \frac{\|v\|_p}{\|u\|_p}$.
We use $\|u\|$ and $\|G\|$ to denote signal and system induced norms when the specific $p$-norm is not important.  Additionally, we reserve capital letters for systems, matrices, and constants and lowercase letters for signals and vectors. Lastly, we use shorthand $u_{i:j}$ to denote a subsequence of a signal $u$ from time $i$ to $j$: $u_{i:j} = {\small \bsmtx u_i \\ \svdots \\ u_j \esmtx}$.

\section{Motivation}
\label{sec:motivation}

\subsection{Adaptive FIR Disturbance Rejection}

Consider the feedback diagram in Figure~\ref{fig:afdr-feedback} with an unknown output disturbance.  The system has a baseline controller in the inner-loop and an Adaptive FIR Disturbance Rejection (AFDR) controller in the outer-loop. The objective of the AFDR is to estimate the disturbance and inject a synthetic reference signal to cancel the effect of the disturbance.  However, we show in this section that the AFDR can cause an instability in the presence of model uncertainty.  This motivates the safety filter design introduced in Section~\ref{sec:prelim}.


\begin{figure*}[ht!]
\centering
\scalebox{0.9}{
\begin{picture}(440,110)(20,-45)
    \thicklines 

    \put(40,6.66){\vector(1,0){20}}
    \put(20,-6.66){\vector(1,0){40}}
    \put(60,-20){\framebox(50,40){$E(z)$}}
    \put(110,0){\vector(1,0){20}} 
        \put(116,6.66){$\hat{w}$}
    \put(130,-20){\framebox(50,40){RLS FIR}}
    \put(180,0){\vector(1,0){40}} 
        \put(206,6.66){$r$}
    \put(232.5,0){\circle{25}} 
    \put(245,0){\vector(1,0){20}} 
        \put(251,6.66){$e$}
    \put(265,-20){\framebox(50,40){$K(z)$}}
    \put(315,0){\vector(1,0){20}} 
        \put(321,6.66){$u$}
    \put(335,-20){\framebox(50,40){$G(z)$}}
    \put(385,0){\vector(1,0){20}} 
        \put(391,6.66){$v$}
    \put(417.5,0){\circle{25}} 
    \put(430,0){\vector(1,0){30}} 
        \put(437.5,6.66){$y$}

    \put(440,0){\line(0,-1){40}}
    \put(440,-40){\line(-1,0){420}}
    \put(20,-40){\line(0,1){33.34}}
    \put(232.5,-40){\vector(0,1){27.5}}

    \put(190,0){\line(0,1){35}}
    \put(190,35){\line(-1,0){150}}
    \put(40,35){\line(0,-1){28.34}}
    \put(30,-30){\dashbox(170,75)}
    \put(15,50){Adaptive FIR Disturbance Rejection (AFDR)}
    \put(417.5,35){\vector(0,-1){22.5}} 
        \put(415,41.66){$d$}

    \put(223,-1.3){\tiny $+$}
    \put(229.85,-9){\tiny $-$}
    \put(408,-1.3){\tiny $+$}
    \put(414.75,6.1){\tiny $+$}
\end{picture}}
\caption{Feedback system with a baseline controller combined with an RLS-based adaptive FIR disturbance rejection controller.}
\label{fig:afdr-feedback}
\end{figure*}
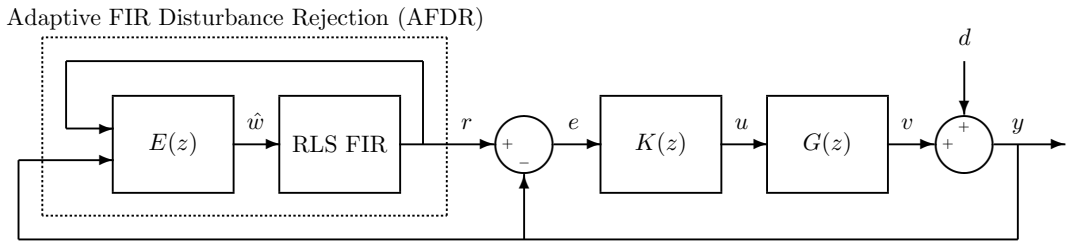

Let $G(z)$ and $K(z)$ denote the plant and inner-loop controller, respectively. Moreover, let $R(z)$, $D(z)$, $Y(z)$ denote the $z$ transforms of the signals 
$r$, $d$, $y$, respectively.  Neglecting the AFDR, the dynamics from inputs $(r,d)$ to output $y$ are given by:
\begin{align}
\label{eq:ExampleY}
    Y(z) &= T(z) \, R(z) + S(z) \, D(z),
\end{align}
where $L(z)=G(z)K(z)$, $S(z) = \frac{1}{1+L(z)}$ and $T(z) = \frac{L(z)}{1+L(z)}$ are the loop, sensitivity, and complementary sensitivity associated with the inner-loop feedback, respectively. 

The first component of AFDR is disturbance estimation.  The inner-loop controller $K(z)$ partially attenuates the disturbance.  Let $W(z)=S(z)\, D(z)$ denote the effective disturbance remaining at the output after the inner-loop is closed.  We can reconstruct the effective disturbance as $W(z) = Y(z) - T(z)R(z)$ using the $z$-domain relationship~\eqref{eq:ExampleY}.  This reconstruction is perfect when the true plant dynamics are perfectly known but will be imperfect otherwise.  In general, the true plant is not perfectly known and a plant model or estimate is used for the control design instead.  To make this distinction, let $G(z)$ and $\hat{G}(z)$ denote the uncertain and nominal plant, respectively.  Using the nominal plant model, we define the disturbance estimator as:
\begin{align}
\label{eq:estimator-z}
    \hat{W}(z) = E(z) \bmtx R(z) \\ Y(z) \emtx \mbox{ where }
    E(z) = \bmtx -\hat{T}(z) \\ I \emtx
\end{align}
where $\hat{W}(z)$ and $E(z)$ are the estimated effective disturbance and estimator, respectively.  Moreover, $\hat{T}(z)=\frac{\hat{G}(z)K(z)}{1+\hat{G}(z)K(z)}$ is an estimate of the complementary sensitivity constructed based on the nominal plant model $\hat{G}(z)$.

The second component of AFDR is adaptive FIR filtering.  Here, the effective disturbance estimate is used for further attenuation.  Let $\hat{w}_t\in\R$ denote the effective disturbance estimate at time $t$.  The injected reference $r_t$ is the output of an adaptive FIR filter with time-varying coefficients:
\begin{align}
\label{eq:AdaptiveFIR}
    r_t &= \sum_{i=0}^{H-1} \theta_{t}(i) \, \hat{w}_{t-i},
\end{align}
where $H$ is the adaptive FIR filter length and $\theta_{t}(i)\in\R$ is the FIR coefficient corresponding to $\hat{w}_{t-i}$ at time $t$.  The adaptive FIR filter~\eqref{eq:AdaptiveFIR} is similar to the FIR disturbance action policies used in recent OCO methods \cite{anava14OCO,hazan11OML,zinkevich03OCP,hazan07,shalev11,hazan16,agarwal19ANIPS,foster20,goel2023best}.

The goal of AFDR is to choose FIR coefficients $\theta_t := \bsmtx \theta_t(0) & \ldots & \theta_t(H-1) \esmtx^\top \in \R^H$ given the full history of disturbance estimates to minimize the variance of the output $y$.  This can be formulated as the following least squares optimization problem:
\begin{align}
\label{eq:LeastSquares}
    \theta_t^\star := \arg \min_{\theta\in \R^H}
    \left\| 
        \Phi_t(\hat{w}_{0:t}) \, \theta + \hat{w}_{0:t}
    \right\|_2,
\end{align}
where $\Phi_t(\hat{w}_{0:t}) \in \R^{(t+1) \times H}$ and $\hat{w}_{0:t}\in\R^{t+1}$ are the matrix of regressors and observation history at time $t$, respectively.  Appendix~\ref{app:AFDR} provides the details on the construction of $\Phi_t(\hat{w}_{0:t})$.  The least squares formulation~\eqref{eq:LeastSquares} determines the constant FIR coefficients that would have minimized the output variance given the past history of disturbance estimates.  This can be efficiently solved in real-time using RLS (Section 2.2 of \cite{astrom94} or Section 9.4 of \cite{hayes96}).  The adaptive FIR filter~\eqref{eq:AdaptiveFIR} then uses the RLS solution at each time: $\theta_{t} = \theta_t^\star$.

\subsection{Example: Effect of Model Uncertainty}
\label{sec:motivation-example}

To illustrate the effect of model uncertainty, consider the following nominal plant and controller:
{\small
\begin{align}
\nonumber
    \hat{G}(z) &= 10^{-4} \left(
        \frac{5.399 z^3 + 5.308 z^2 + 3.143 z + 4.459}
        {z^4 - 2.14 z^3 + 2.249 z^2 - 2.08 z + 0.9704}
        \right), \\
\label{eq:ExampleG0K}
    K(z) &=
        \frac{75.78 z^2 - 148.4 z + 72.63}
        {z^2 - 1.535 z + 0.5353}.
\end{align}
}

\noindent
The nominal plant corresponds to a continuous-time system with a double integrator and large resonance at 150 rad/sec.  This is a model of rigid body motion coupled with flexible dynamics as is common in many high precision feedback systems.  The controller corresponds to a PID controller with approximate derivative, designed to have a loop bandwidth near 12.5 rad/sec.  The continuous-time plant model and PID controller are discretized with sample time $T_s=0.01$ sec to obtain $\hat{G}(z)$ and $K(z)$.

The AFDR feedback system in Figure~\ref{fig:afdr-feedback} is simulated for 20 seconds (corresponding to $t=0,\ldots,\frac{20}{T_s}$ discrete time steps) with adaptive FIR filter length $H=8$.  The system is perturbed by the output disturbance:
\begin{align}
\label{eq:RLSdist}
    d_t = 1.4\sin(3t) + 0.9\sin(5t+0.4) + n_t,
\end{align}
where $n$ is IID, zero-mean white noise signal with variance $E[n^2_t]=(0.05)^2$.  Note that the white noise has a standard deviation of 0.05 which is a lower bound on the output standard deviation achievable via control.  Conversely, the disturbance has a standard deviation of 1.18.  This is what the output standard deviation would be with no inner- and outer-loop control, assuming the plant is stable.  Thus, we would like to reduce the standard deviation below 1.18 using both the inner- and outer-loop controllers.

The top subplot of Figure~\ref{fig:rls-motivation} shows the output for the nominal plant and controller provided above.  Note that this is a simulation for the case when there is no model uncertainty:  $G(z)=\hat{G}(z)$.  The AFDR is off for $t< 10$ seconds (corresponding to $r_t=0$ for $t=0,1,\ldots, \frac{10}{T_s}-1$).  The output $y$ has a standard deviation of 0.2734 during this time.  In other words, the classical controller is able to partially attenuate the disturbance.  The AFDR is on for $t\ge 10$, further reducing the output standard deviation down to 0.0647.  In other words, the AFDR almost perfectly cancels the two disturbance harmonics in~\eqref{eq:RLSdist}.

The bottom subplot of Figure~\ref{fig:rls-motivation} shows the output for the controller provided above and an uncertain plant given by:
\begin{align*}
    \Delta(z) &= 10^{-4} \left(
        \frac{0.5366 z - 1.195}
        {z^2 + 0.1429 z - 0.2798}
        \right), \\
    G(z) &= \hat{G}(z)+ \Delta(z),
\end{align*}
where $\Delta(z)$ represents the uncertainty or model error.  Here, the true plant dynamics used to evolve the states are $G(z)$, but the AFDR uses the nominal model $\hat{G}(z)$ to construct the estimated complementary sensitivity $\hat{T}(z)$ for the disturbance estimator in~\eqref{eq:estimator-z}.  The model error has minimal effect on the classical controller performance ($t< 10$).  However, the model error causes an instability once the AFDR is turned on ($t\ge 10$).  This illustrates the need for a framework for systems with online learning, uncertainty, and disturbance, as well as a method for robust AFDR.

\begin{figure}[ht!]
    \centering
    \includegraphics[width=0.5\linewidth]{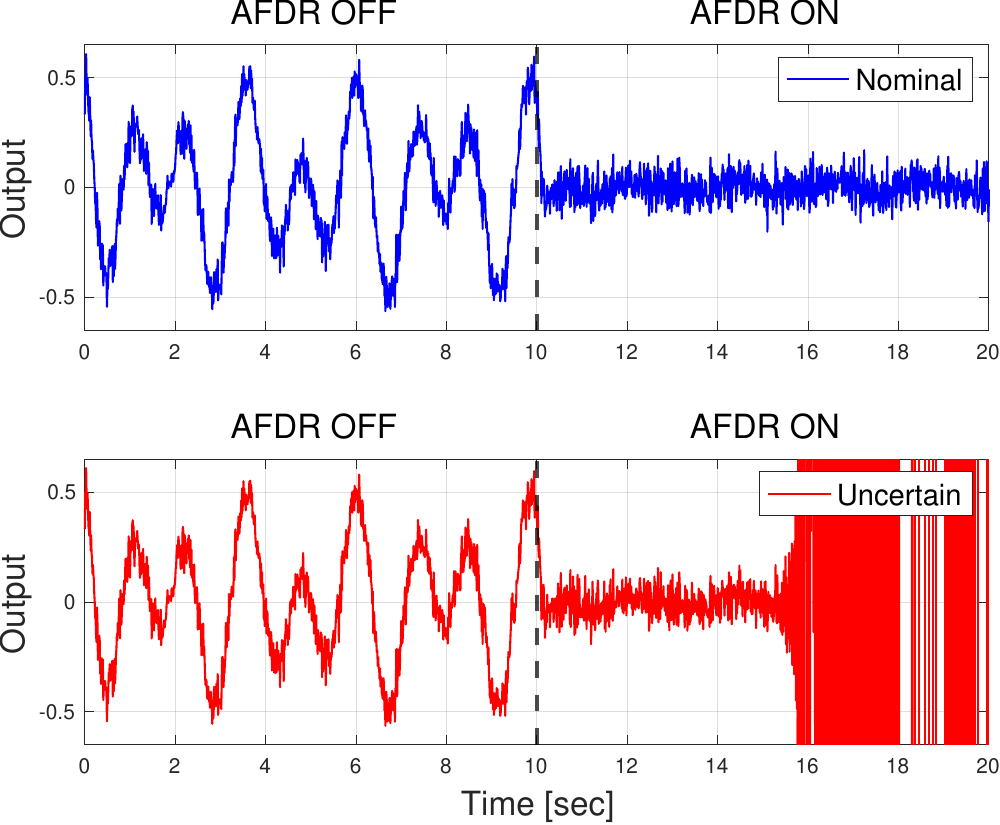}
    \caption{RLS-based AFDR rejects the disturbance at the output for the nominal plant (top), but goes unstable for the uncertain plant (bottom).}
    \label{fig:rls-motivation}
\end{figure}

\section{Preliminaries}
\label{sec:prelim}

\subsection{Problem Formulation}
\label{sec:problem}

The example in Section~\ref{sec:motivation} is based on a SISO model and updates the FIR coefficients via RLS.  We showed that small amounts of uncertainty can cause the system to go unstable.  This section focuses on the design of a safety filter for robust AFDR in a more general setting. This includes MIMO systems and alternative FIR update methods.

Consider the feedback system in Figure~\ref{fig:lft-uncertain} with disturbance $d$ and output $y$.  Let $M$ and $\Delta$ denote the nominal system dynamics and uncertainty, respectively.  We assume the uncertainty $\Delta$ is stable and bounded by $\delta$, i.e. $\|\Delta\|\le\delta$.  The filter $F=F_\mathrm{SF} F_\mathrm{LTV}$ describes the series interconnection of the adaptive FIR filter $F_\mathrm{LTV}$ into the safety filter $F_\mathrm{SF}$.  This feedback interconnection is called a linear fractional transformation (LFT) \cite{zhou95} and is denoted by $T_{d\to y}(M,\Delta,F)$ where $(\Delta,F)$ are wrapped around the upper channels of $M$.  We refer to $T_{d\to y}(M,\Delta,F)$ as the uncertain system and $T_{d\to y}(M,0,F)$ as the nominal system.  The dimensions of all signals are denoted by a subscript, e.g. $d_t$ and $y_t$ have dimensions $n_d\times 1$ and $n_y\times 1$, respectively.  Note that the LFT generalizes to alternative control architectures, but the signals are labeled corresponding to the AFDR feedback system in~\ref{fig:afdr-feedback} for comparison.

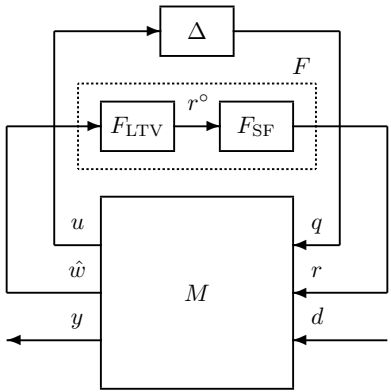
\begin{figure} 
\centering
\scalebox{0.9}{
\begin{picture}(160,160)(0,0)
    \thicklines 
    
    \put(40,20){\vector(-1,0){40}} 
        \put(27,26.66){$y$}
    \put(40,40){\line(-1,0){40}} 
        \put(26,46.66){$\hat{w}$}
    \put(40,60){\line(-1,0){20}} 
        \put(27,66.66){$u$}
    \put(40,0){\framebox(80,80){$M$}}
    \put(160,20){\vector(-1,0){40}} 
        \put(128,26.66){$d$}
    \put(160,40){\vector(-1,0){40}} 
        \put(128,46.66){$r$}
    \put(140,60){\vector(-1,0){20}} 
        \put(128,66.66){$q$}
    
    \put(0,110){\vector(1,0){40}} 
    \put(40,100){\framebox(30,20){$F_\mathrm{LTV}$}}
    \put(70,110){\vector(1,0){20}} 
        \put(76,116.66){$r^\circ$}
    \put(90,100){\framebox(30,20){$F_\mathrm{SF}$}}
    \put(120,110){\line(1,0){40}} 
    \put(30,92){\dashbox(100,36)} 
        \put(120,132){$F$}
    \put(20,150){\vector(1,0){45}} 
    \put(65,140){\framebox(30,20){$\Delta$}}
    \put(95,150){\line(1,0){45}} 
    \put(20,60){\line(0,1){90}} 
    \put(140,60){\line(0,1){90}} 
    \put(0,40){\line(0,1){70}} 
    \put(160,110){\line(0,-1){70}} 
\end{picture}
}
\caption{Uncertain system $T_{d\to y}(M,\Delta,F)$ with disturbance, adaptive FIR filtering, and safety filtering.}
\label{fig:lft-uncertain}
\end{figure}


As mentioned in Sections~\ref{sec:intro} and \ref{sec:motivation}, adaptive FIR filtering is useful for disturbance rejection for high-precision control applications where the system is affected by an unknown disturbance with learnable characteristics.  The adaptive FIR filter with filter length $H$ is defined as:
\begin{align}    
\label{eq:adaptive-fir}
        r^\circ_t &= \sum_{i=0}^{H-1}\theta_{t}(i) \, \hat{w}_{t-i},
\end{align}
where $\hat{w}_t\in\R^{n_w}$ and $r^\circ_t\in\R^{n_r}$ are the input and output of the adaptive FIR filter at time $t$, respectively.  We can express this compactly as $r^\circ_t =\Theta_t \, \hat{w}_{t:t-H+1}$ where $\Theta_t := \bsmtx \theta_{t}(0) & \ldots & \theta_{t}(H-1) \esmtx \in\R^{n_r\times n_wH}$.   The adaptive FIR filter has a systems interpretation which we denote as
$F_\mathrm{LTV}:\ell_p^{n_w}\to\ell_p^{n_r}$
where \eqref{eq:adaptive-fir} defines the output at time $t$.



The adaptive FIR coefficients $\Theta_t$ are typically updated via online optimization, e.g. online gradient descent (OGD) or RLS, and are based on the nominal dynamics $M$.  In order to prevent undesired consequences, e.g. profit loss or injury, high-precision applications require provable safety guarantees when the system dynamics are not perfectly known.  We use safety and stability interchangeably and use the following notion of stability.


\vspace{0.05in}
\begin{definition}[Nominal Finite-Gain Stability]
    The feedback interconnection in Figure~\ref{fig:lft-uncertain} is nominally finite-gain $\ell_p$ stable if $\|T_{d\to y}(M,0,F)\|<\infty$. 
\end{definition}
\vspace{0.05in}
\begin{definition}[Robust Finite-Gain Stability]
    The feedback interconnection in Figure~\ref{fig:lft-uncertain} is robustly finite-gain $\ell_p$ stable if $\max_{\|\Delta\|\le\delta} \|T_{d\to y}(M,\Delta,F)\|<\infty$.
\end{definition}
\vspace{0.075in}

Before stating the safety filter design objective, we make the following assumptions: (i) the disturbance is bounded, i.e. $d\in\ell_p$, (ii) the dynamics $M$ are known, LTI, and stable, and (iii) the uncertainty $\Delta$ is stable and bounded by $\delta$, i.e. $\|\Delta\|\le\delta$.  Given assumptions (i)-(iii) and uncertainty bound $\delta$, the objective is to design the safety filter to a) ensure robust finite-gain stability and b) preserve the nominal disturbance rejection performance.

\subsection{Background}
\label{sec:background}

In this section, we introduce a scaled small gain condition and induced $\ell_\infty$-norm bounding property of adaptive FIR filters.  These are existing results corresponding to Theorem 1 and Lemma 2 in \cite{lai24} for the case where there is no safety filter ($F_\SF=1$).  We use these results to formulate the safety filter in Section~\ref{sec:main}.

\vspace{0.05in}
\begin{theorem}[Scaled Small Gain]
\label{thm:scaled-SG}
    Let $T_{d\to y}(M,\Delta,F_\LTV)$ denote the feedback interconnection 
    in Figure~\ref{fig:lft-uncertain} with $F_\SF=1$.  Assume $M:\ell_{pe}\to\ell_{pe}$,
    $\Delta:\ell_{pe}^{n_u}\to\ell_{pe}^{n_q}$, and
    $F_\LTV:\ell_{pe}^{n_w} \to\ell_{pe}^{n_r}$
    are finite-gain stable systems of appropriate dimensions with $\|F_\LTV\|\le\beta$ and $\|\Delta\|\le\delta$.
    
    Let $M$ be partitioned as:
    \begin{align}
        M &= \bmtx M_{11} & M_{12} \\ M_{21} & M_{22} \emtx
    \end{align}
    where $M_{11}$ and $M_{22}$ have dimensions $(n_u+n_w)\times (n_q+n_r)$
    and $n_y\times n_d$, respectively. Moreover, define the following scaled system for any scalars $s_1$ and $s_2$:
    {\small
    \begin{align}
        \bar{M}_{11}(s_1,s_2,\delta,\beta) &:= \bmtx \frac{1}{s_1}I & 0 \\ 0 & \frac{1}{s_2}I \emtx M_{11} \bmtx s_1\delta I & 0 \\ 0 & s_2\beta I \emtx.
    \end{align}
    }
    
    \noindent The feedback interconnection $T_{d\to y}(M,\Delta,F_\LTV)$ is robustly finite-gain stable if there exists scalars $s_1>0$ and $s_2>0$ such that $\|\bar{M}_{11}(s_1,s_2,\delta,\beta)\|<1$.
\end{theorem}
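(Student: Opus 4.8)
The plan is to reduce the statement to the standard (unstructured) small gain theorem after a structure-preserving scaling, in the spirit of the $D$-scalings of $\mu$-analysis specialized to two blocks with scalar scalings. First I would absorb the feedback elements into one operator $\Xi := \bmtx \Delta & 0 \\ 0 & F_\LTV \emtx$, so that the interconnection signals satisfy $\bmtx u \\ \hat{w} \emtx = M_{11}\bmtx q \\ r \emtx + M_{12}\,d$ and $\bmtx q \\ r \emtx = \Xi \bmtx u \\ \hat{w} \emtx$. Eliminating the internal signals gives $(I - M_{11}\Xi)\bmtx u \\ \hat{w} \emtx = M_{12}\,d$ and $y = \bigl[M_{22} + M_{21}\Xi(I - M_{11}\Xi)^{-1}M_{12}\bigr]d$. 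Since $M$ is finite-gain stable each block $M_{ij}$ is bounded and $\|\Xi\|\le\max(\delta,\beta)<\infty$, so it suffices to show that $(I - M_{11}\Xi)^{-1}$ exists and is bounded by a constant independent of $\Delta$ for every admissible $\Delta$ with $\|\Delta\|\le\delta$.

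Second, I would build the scaling. Factor $\Xi = B\,\tilde{\Xi}$, where $B := \bmtx \delta I & 0 \\ 0 & \beta I \emtx$ (with identity blocks sized $n_q,n_r$) and $\tilde{\Xi} := \bmtx \frac{1}{\delta}\Delta & 0 \\ 0 & \frac{1}{\beta}F_\LTV \emtx$, so that $\|\tilde{\Xi}\|\le\max\bigl(\tfrac{1}{\delta}\|\Delta\|,\tfrac{1}{\beta}\|F_\LTV\|\bigr)\le 1$. Then introduce the scalar, block-diagonal scalings $D_1 := \bmtx \frac{1}{s_1}I & 0 \\ 0 & \frac{1}{s_2}I \emtx$ and $D_2 := \bmtx \frac{1}{s_1}I & 0 \\ 0 & \frac{1}{s_2}I \emtx$ acting on the output space ($n_u,n_w$) and input space ($n_q,n_r$) of $M_{11}$, respectively. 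Transforming the loop operator and inserting $D_2^{-1}D_2$ yields
\[
D_1(I - M_{11}\Xi)D_1^{-1} = I - \bigl(D_1 M_{11} B D_2^{-1}\bigr)\bigl(D_2\tilde{\Xi} D_1^{-1}\bigr) = I - \bar{M}_{11}\,\tilde{\Xi},
\]
where the direct computation gives $D_1 M_{11} B D_2^{-1} = \bar{M}_{11}(s_1,s_2,\delta,\beta)$ exactly as in the statement, and the cancellation $D_2\tilde{\Xi}D_1^{-1} = \tilde{\Xi}$ holds because the common scalar $s_i$ scaling the input and output ports of each block cancels.

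Third, since $\|\tilde{\Xi}\|\le 1$, submultiplicativity gives $\|\bar{M}_{11}\tilde{\Xi}\| \le \|\bar{M}_{11}\| < 1$, so the Neumann series $\sum_{k=0}^{\infty}(\bar{M}_{11}\tilde{\Xi})^{k}$ converges and $(I - \bar{M}_{11}\tilde{\Xi})^{-1}$ is bounded by $(1-\|\bar{M}_{11}\|)^{-1}$, a bound independent of $\Delta$. Undoing the bounded, boundedly invertible similarity, $(I - M_{11}\Xi)^{-1} = D_1^{-1}(I - \bar{M}_{11}\tilde{\Xi})^{-1}D_1$ is bounded uniformly in $\Delta$; substituting back into the expression for $T_{d\to y}(M,\Delta,F_\LTV)$ shows it has finite induced norm for every $\|\Delta\|\le\delta$, and taking the supremum over $\Delta$ gives robust finite-gain stability.

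The scaling algebra (dimension bookkeeping and the cancellation $D_2\tilde{\Xi}D_1^{-1}=\tilde{\Xi}$) is routine, and it is precisely the block-diagonal compatibility that lets $s_1,s_2$ be tuned independently, yielding less conservatism than the plain small gain test. I expect the main obstacle to be not the algebra but \emph{well-posedness} of the interconnection in the extended space $\ell_{pe}$: I must argue that causal solutions $(u,\hat{w},q,r)$ actually exist, not merely that the formal inverse is norm-bounded. I would handle this by running the contraction/Neumann argument on truncated signals, exploiting causality of $M$, $\Delta$, and $F_\LTV$, so that $\|\bar{M}_{11}\|<1$ delivers a genuine causal finite-gain map on every finite horizon and hence on $\ell_p$. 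A secondary point to check is the block-norm bound $\|\tilde{\Xi}\|\le 1$ for the chosen $\ell_p$ geometry, which follows from the identity $\|(a,b)\|_p^p = \|a\|_p^p + \|b\|_p^p$ for stacked signals.
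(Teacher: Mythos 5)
The paper itself contains no proof of Theorem~\ref{thm:scaled-SG}: it is imported from Theorem 1 of \cite{lai24}, so there is no in-paper argument to compare yours against. Judged on its own merits, your proof is the standard scaled small gain argument and is essentially correct: absorb $\Delta$ and $F_\LTV$ into the block-diagonal operator $\Xi$, normalize it to $\tilde\Xi$ with $\|\tilde\Xi\|\le 1$, observe that the constant scalar scalings commute through each diagonal block so that $D_2\tilde\Xi D_1^{-1}=\tilde\Xi$ while $D_1 M_{11} B D_2^{-1}=\bar{M}_{11}(s_1,s_2,\delta,\beta)$, and then the Neumann series yields a bound on $(I-M_{11}\Xi)^{-1}$ uniform over $\|\Delta\|\le\delta$, hence a uniform bound on $T_{d\to y}=M_{22}+M_{21}\Xi(I-M_{11}\Xi)^{-1}M_{12}$. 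You are also right that the real technical content beyond the algebra is well-posedness on $\ell_{pe}$, and the truncation/causality route you sketch is the standard way to close that gap. Three small points to tighten: (i) your justification $\|(a,b)\|_p^p=\|a\|_p^p+\|b\|_p^p$ for the block-diagonal norm bound fails at $p=\infty$, where the stacked-signal norm is a maximum rather than a sum; the conclusion $\|\tilde\Xi\|\le\max\bigl(\tfrac{1}{\delta}\|\Delta\|,\tfrac{1}{\beta}\|F_\LTV\|\bigr)\le 1$ still holds for every $p$ including $\infty$, but the cited identity does not, and $p=\infty$ is precisely the case this paper uses downstream; (ii) the normalization $\tilde\Xi$ presumes $\delta>0$ and $\beta>0$, so the degenerate cases should be dispatched separately (they are easy, since a zero block opens the corresponding loop); (iii) the cancellation $\tfrac{1}{s_i}\,\Delta\, s_i=\Delta$ uses homogeneity/linearity of the blocks, which holds here because $\Delta$ is LTI and $F_\LTV$ is linear in its input, but it deserves an explicit sentence since the theorem statement only says ``finite-gain stable systems,'' and the scalar-scaling trick is exactly what breaks for general non-homogeneous nonlinear blocks.
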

\vspace{0.1in}

This result holds for any signal $p$-norm and corresponding system induced $\ell_p$-norm.  Assuming the  uncertainty bound $\delta$ is known, we can ensure robust finite-gain stability by finding a bound $\beta$ on the FIR filter $F_\LTV$ such that the scaled small gain condition holds.  
The bound $\beta$ roughly quantifies the amount of freedom the online optimization has to learn and reject the disturbance. As the bound increases, the OCO controller has potential for improved performance, but risks instability.  The largest such bound $\beta^\star$ can be computed by solving the optimization problem:
\begin{equation}
\begin{aligned}
\label{eq:beta-opt}
    \beta^\star =& \arg\sup_{\beta, s_1, s_2>0} & & \beta \\
    &\text{subject to} & & \|\bar{M}_{11}(s_1,s_2,\delta,\beta)\| < 1.
\end{aligned}
\end{equation}
Thus, any bound $\beta\in[0,\beta^\star)$ will ensure robust stability.  We use this result to define the safety filter.

The next result is useful for implementing the safety filter online.  Note again that Theorem~\ref{thm:scaled-SG} holds for any induced norm.  While the induced $\ell_2$-norm is a typical choice in robust control, it is easier to implement gain constraints on the adaptive FIR filter $F_\LTV$ online using the induced $\ell_\infty$-norm.  The following lemma relates the induced $\ell_\infty$-norm of the adaptive FIR filter to the induced $\infty$-norm of the adaptive FIR coefficients at each time.

\vspace{0.05in}
\begin{lemma}[Adaptive FIR Bounding Property]
\label{thm:FIR-bounding-prop}
    Suppose the adaptive FIR filter $F_\LTV$ has the output at each time $t$ given by~\eqref{eq:adaptive-fir}. Then
    \begin{align}
        \|F_\LTV\|_{\infty\to\infty} &= \sup_t\|\Theta_t\|_{\infty\to\infty},
    \end{align}
    where $\Theta_t := \bsmtx \theta_{t}(0) & \ldots & \theta_{t}(H-1) \esmtx \in\R^{n_r\times n_wH}$.
\end{lemma}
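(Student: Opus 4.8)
The plan is to prove the stated identity by establishing two matching inequalities: an upper bound $\|F_\LTV\|_{\infty\to\infty}\le\sup_t\|\Theta_t\|_{\infty\to\infty}$, and a tightness (lower) bound via an explicitly constructed worst-case input. The three facts I would keep in mind throughout are: (i) the induced $\ell_\infty$ signal norm is $\|\hat w\|_\infty=\sup_t\|\hat w_t\|_\infty$; (ii) the stacked regressor obeys $\|\hat w_{t:t-H+1}\|_\infty=\max_{0\le i\le H-1}\|\hat w_{t-i}\|_\infty\le\|\hat w\|_\infty$; and (iii) the matrix induced norm $\|\Theta_t\|_{\infty\to\infty}$ is the maximum absolute row sum, attained by a $\pm1$ sign vector matching that row.

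For the upper bound I would take any $0\ne\hat w\in\ell_\infty^{n_w}$, write the output compactly as $r^\circ_t=\Theta_t\,\hat w_{t:t-H+1}$, and apply submultiplicativity of the induced matrix norm together with fact (ii) to get $\|r^\circ_t\|_\infty\le\|\Theta_t\|_{\infty\to\infty}\,\|\hat w_{t:t-H+1}\|_\infty\le\bigl(\sup_s\|\Theta_s\|_{\infty\to\infty}\bigr)\|\hat w\|_\infty$. Taking the supremum over $t$ and dividing by $\|\hat w\|_\infty$ yields $\|F_\LTV\|_{\infty\to\infty}\le\sup_t\|\Theta_t\|_{\infty\to\infty}$.

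For the tightness bound I would fix a time $\tau\ge H-1$ and choose a maximizer $v^\star\in\R^{n_wH}$ with $\|v^\star\|_\infty=1$ and $\|\Theta_\tau v^\star\|_\infty=\|\Theta_\tau\|_{\infty\to\infty}$, which exists by fact (iii). Partitioning $v^\star$ conformally into $H$ blocks $v^\star_0,\dots,v^\star_{H-1}\in\R^{n_w}$, I would define the finite-support test input $\hat w_{\tau-i}=v^\star_i$ for $i=0,\dots,H-1$ and $\hat w_s=0$ otherwise. The crucial observation is that these $H$ blocks occupy \emph{distinct} time indices $\tau,\tau-1,\dots,\tau-H+1$, all nonnegative because $\tau\ge H-1$; there is therefore no interference between blocks, the input is admissible, and $\|\hat w\|_\infty=\max_i\|v^\star_i\|_\infty=\|v^\star\|_\infty=1$. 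Since $r^\circ_\tau=\Theta_\tau v^\star$ by construction, we obtain $\|F_\LTV\|_{\infty\to\infty}\ge\|r^\circ\|_\infty/\|\hat w\|_\infty\ge\|\Theta_\tau\|_{\infty\to\infty}$, and taking the supremum over admissible $\tau$ closes the gap.

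The step I expect to require the most care is precisely this lower bound, and specifically the boundary effect at early times $t<H-1$. Under the zero-initial-condition convention $\hat w_s=0$ for $s<0$, only the first $t+1$ coefficient blocks of $\Theta_t$ can be excited, so the construction attains the \emph{full} matrix norm only for $\tau\ge H-1$; for smaller $t$ the reachable gain is the induced norm of a column-submatrix of $\Theta_t$, which cannot exceed $\|\Theta_t\|_{\infty\to\infty}$ since deleting columns never increases the maximum absolute row sum. I would close the argument either by restricting the supremum in the construction to $\tau\ge H-1$ and noting that the two suprema coincide in the regime of interest (the filter running with full history), or, if the input is permitted to be nonzero on the infinite past, by observing that the identical block-placement construction then applies verbatim at every $\tau$. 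The remaining ingredients — the max-absolute-row-sum characterization and the stacked-norm identity — are routine.
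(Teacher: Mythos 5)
Your argument is correct, and there is in fact no in-paper proof to compare it against: the paper imports this statement verbatim as Lemma~2 of \cite{lai24} and never proves it, so your two-inequality route (submultiplicativity of the induced matrix norm for the upper bound, a sign-vector test input placed at distinct time indices for tightness) is the natural argument and almost certainly mirrors the cited one. The point most worth stressing is the boundary caveat you raise yourself, because it is not a mere technicality: under the paper's own conventions---signals live on $\N_+$, and Appendix~A explicitly sets $\hat w_j = 0$ for $j<0$---the equality as literally stated can fail. For instance, take $n_w=n_r=1$, $H=2$, $\Theta_0 = \bsmtx 0 & 100 \esmtx$, and $\Theta_t = 0$ for all $t\ge 1$: every output is zero because the large coefficient multiplies only the unreachable pre-initial sample, so $\|F_\LTV\|_{\infty\to\infty}=0$ while $\sup_t\|\Theta_t\|_{\infty\to\infty}=100$. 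Thus exact equality requires one of the repairs you propose---replacing $\Theta_t$ for $t<H-1$ by its column-submatrix hitting nonnegative time indices, restricting the supremum to $t\ge H-1$, or allowing two-sided inputs---and your proposal is honest in flagging that the first closing option (``the two suprema coincide in the regime of interest'') is an assumption rather than a theorem. None of this affects how the lemma is used downstream: the safety filter only needs the upper-bound direction, namely that $\Theta_t\in\mathcal{F}_\beta$ for all $t$ implies $\|F_\LTV\|_{\infty\to\infty}\le\beta$, and that direction is exactly your first inequality, which holds without any caveat.
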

\vspace{0.1in}

\noindent Thus, we can bound the induced $\ell_\infty$-norm of the system $F_\LTV$ by imposing an induced $\infty$-norm constraint on the matrix $\Theta_t$ at each time $t$.  The next section gives the formal definition of the safety filter and its online implementation.

\section{Main Results}
\label{sec:main}

In this section, we define the safety filter as the solution to an online optimization problem using the results in Section~\ref{sec:background} and provide an explicit solution which can be easily implemented online.

The first objective of the safety filter is to ensure robust stability.  We will use the safety filter to impose this as a constraint on the FIR coefficients pointwise in time.  Let $\beta\in[0,\beta^\star)$ be the bound on $F=F_\SF F_\LTV$, i.e. $\|F\|_{\infty\to\infty}\le\beta$, where $\beta^\star$ is the solution to~\eqref{eq:beta-opt}.  We define the safe set of FIR coefficients as:
\begin{align}
\label{eq:safe-set}
    \mathcal{F}_\beta:=\{\Theta\in\R^{n_r\times n_wH}: \|\Theta\|_{\infty\to\infty}\le\beta\}.
\end{align}
If we design $F_\SF$ to enforce the constraint $\Theta_t\in\mathcal{F}_\beta$ for all $t$, then $\|F\|_{\infty\to\infty}\le\beta$ by Lemma~\ref{thm:FIR-bounding-prop}.  Moreover, the closed-loop system $T_{d\to y}(M,\Delta,F)$ is finite-gain stable by Theorem~\ref{thm:scaled-SG}.  There are many possible choices for the FIR coefficients that will satisfy the robust stability constraint $\Theta_t\in\mathcal{F}_\beta$ for all $t$. 

The second objective of the safety filter is to preserve the nominal disturbance rejection performance.  Let $(r_t^\circ,\Theta_t^\circ)$ and $(r_t,\Theta_t)$ denote the output and corresponding coefficients of the adaptive FIR filter $F_\LTV$ and safety filter $F_\SF$, respectively.  We refer to $r_t^\circ$ as the original or unconstrained OCO command and $r_t$ as the robust or constrained OCO command.  Assuming the coefficient update method is well designed, the original OCO command effectively rejects the disturbance without model uncertainty.  Thus, we are interested in designing the safety filter to enforce robust stability through the safe set $\mathcal{F}_\beta$ while minimizing the change in the original OCO command.  Considering both objectives, we define the safety filter as:
\begin{equation}
\label{eq:safety-filter}
\begin{aligned}
    (r^\star,\Theta^\star) =& \arg\min_{r,\Theta} & & \|r-r_t^\circ\|_{\infty} \\
    & \text{subject to} & & r = \Theta \, \hat{w}_{t:t-H+1} \\
    & & &\Theta\in\mathcal{F}_\beta.
\end{aligned}
\end{equation}
The safety filter output at time $t$ is then defined as $r_t=r^\star$.  Moreover, $\Theta_t=\Theta^\star$ corresponds to the FIR coefficients that are safe and generate the command $r_t=r^\star$.

The minimization problem~\eqref{eq:safety-filter} has an explicit solution in the $\infty$-norm cost.  We provide the solution and proof in the following theorem.

\vspace{0.05in}
\begin{theorem}[Safety Filter Solution]
\label{thm:SF-sol}
Let $i_0$ be an index such that $|\hat{w}_{t:t-H+1}(i_0)|=\|\hat{w}_{t:t-H+1}\|_\infty$ and $e_{i_0}$ be the $i_0^\mathrm{th}$ basis vector.  Then the explicit solution to~\eqref{eq:safety-filter} is:
\begin{align}
\label{eq:r-opt}
    r^\star &= \Theta^\star \, \hat{w}_{t:t-H+1},
\end{align}
where the $i^\mathrm{th}$ row of $\Theta^\star$ is defined as:
{\small
\begin{align}
\label{eq:Theta-opt}
    (\Theta^\star)_i &= \min\big(
    |r^\circ_t(i)|, \, \beta |\hat{w}_{t:t-H+1}(i_0)| \big)
    \cdot
    \frac{\mathrm{sign}(r^\circ_t(i))}{\hat{w}_{t:t-H+1}(i_0)}
    e_{i_0}^\top.
\end{align}
}
\end{theorem}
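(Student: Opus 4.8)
The plan is to exploit the fact that the optimization problem~\eqref{eq:safety-filter} decouples completely across the rows of $\Theta$. Write $w := \hat{w}_{t:t-H+1} \in \R^{n_w H}$ for brevity. Since the induced $\infty$-norm of a matrix is its maximum absolute row sum, the constraint $\Theta \in \mathcal{F}_\beta$ is equivalent to the per-row constraints $\|(\Theta)_i\|_1 \le \beta$ for each $i = 1, \ldots, n_r$. Likewise, the cost $\|r - r_t^\circ\|_\infty = \max_i |(\Theta)_i\, w - r_t^\circ(i)|$ is a maximum over rows, and its $i$-th term depends only on the $i$-th row $(\Theta)_i$. First I would argue that, because both the objective and the feasible set separate across rows, the minimum of the maximum equals the maximum of the row-wise minima: if $(\Theta^\star)_i$ minimizes the $i$-th term subject to $\|(\Theta)_i\|_1 \le \beta$, then stacking these rows simultaneously minimizes every term and hence the overall max, since any other feasible $\Theta$ has each term no smaller than its own minimum.

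This reduces the problem to a family of scalar subproblems: for each $i$, choose a row vector $a := (\Theta)_i$ with $\|a\|_1 \le \beta$ to bring the scalar $a\, w$ as close as possible to the target $r_t^\circ(i)$. The key step is to characterize the attainable set $\{a\, w : \|a\|_1 \le \beta\}$. By H\"older's inequality, $|a\, w| \le \|a\|_1 \|w\|_\infty \le \beta \|w\|_\infty = \beta |w(i_0)|$, and this bound is tight: taking $a = \alpha\, e_{i_0}^\top$ with $|\alpha| \le \beta$ concentrates the entire $\ell_1$ budget on the coordinate of largest magnitude and yields $a\, w = \alpha\, w(i_0)$, which sweeps the whole interval $[-\beta|w(i_0)|,\, \beta|w(i_0)|]$ as $\alpha$ varies. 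Thus the attainable set is exactly this interval.

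Given this interval, each scalar subproblem becomes a one-dimensional projection of the target onto $[-\beta|w(i_0)|,\, \beta|w(i_0)|]$. The closest attainable value is $r^\star(i) = \mathrm{sign}(r_t^\circ(i)) \cdot \min(|r_t^\circ(i)|,\, \beta|w(i_0)|)$: when the target already lies in the interval the cost is zero, and otherwise the optimum is the nearer endpoint. Back-solving for the coefficient that produces this value along the single coordinate $i_0$, namely $\alpha = r^\star(i)/w(i_0)$, recovers exactly the row formula~\eqref{eq:Theta-opt}. Finally I would confirm feasibility directly: $\|(\Theta^\star)_i\|_1 = |\alpha| = \min(|r_t^\circ(i)|,\, \beta|w(i_0)|)/|w(i_0)| \le \beta$ for every $i$, so $\Theta^\star \in \mathcal{F}_\beta$.

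The main obstacle is the tightness half of the attainable-range characterization: showing that the H\"older bound is met by placing the full $\ell_1$ mass on an index $i_0$ realizing $\|w\|_\infty$, and that this single-coordinate family already reaches every point of the interval, so no spreading of weight across coordinates can do better. Everything else --- the row decoupling and the scalar projection --- is then routine. One minor edge case worth noting is $w = 0$, where $w(i_0) = 0$ and the result is vacuous since the only attainable output is $r = 0$.
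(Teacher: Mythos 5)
Your proof is correct, and it takes a genuinely different route from the paper's. The paper gives a guess-and-verify argument at the vector level: it splits into the cases $\|r_t^\circ\|_\infty \le \beta\|\hat{w}_{t:t-H+1}\|_\infty$ and $\|r_t^\circ\|_\infty > \beta\|\hat{w}_{t:t-H+1}\|_\infty$; in the first it checks that the candidate reproduces $r_t^\circ$ with zero cost, and in the second it derives the global lower bound $\|r-r_t^\circ\|_\infty \ge \|r_t^\circ\|_\infty - \beta\|\hat{w}_{t:t-H+1}\|_\infty$ from the triangle inequality together with $\|r\|_\infty \le \|\Theta\|_{\infty\to\infty}\|\hat{w}_{t:t-H+1}\|_\infty$, then verifies that the candidate attains this bound. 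You instead construct the solution: the constraint $\|\Theta\|_{\infty\to\infty}\le\beta$ decouples into per-row $\ell_1$ budgets, the $\infty$-norm cost decouples into per-entry deviations, so the problem splits into $n_r$ independent scalar problems; you then show each entry's attainable set $\{(\Theta)_i\,\hat{w}_{t:t-H+1} : \|(\Theta)_i\|_1\le\beta\}$ is exactly the interval $[-\beta\|\hat{w}_{t:t-H+1}\|_\infty,\ \beta\|\hat{w}_{t:t-H+1}\|_\infty]$ (H\"older for one inclusion, concentrating the whole budget on coordinate $i_0$ for the other), reducing each scalar problem to a projection onto an interval. Your route is longer but buys more: it exhibits the reachable set of commands as the hypercube $\{r:\|r\|_\infty\le\beta\|\hat{w}_{t:t-H+1}\|_\infty\}$, which makes Corollary~\ref{thm:SF-update} (the saturation implementation) immediate; it needs no case split, since clipping covers both cases uniformly; and it proves the slightly stronger fact that $\Theta^\star$ minimizes every entry's deviation simultaneously, not just their maximum. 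The paper's argument is shorter but presupposes the answer rather than deriving it. One shared blemish: when $\hat{w}_{t:t-H+1}=0$, the formula \eqref{eq:Theta-opt} divides by zero; you flag this degenerate case, which the paper's proof silently ignores.
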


\begin{proof}
There are two cases to consider: (A) $\|r_t^\circ\|_\infty \le \beta \|\hat{w}_{t:t-H+1}\|_\infty$ and (B) $\|r_t^\circ\|_\infty > \beta \|\hat{w}_{t:t-H+1}\|_\infty$.   

First, consider Case (A). In this case, each 
entry of $r_t^\circ$ satisfies  $|r^\circ_t(i)|\le \beta |\hat{w}_{t:t-H+1}(i_0)|$ for $i=1,\ldots, n_r$.  Hence, Equation~\ref{eq:Theta-opt} simplifies to:
\begin{align}
\label{eq:Theta-optCaseA}
    (\Theta^\star)_i &= 
    \frac{r^\circ_t(i)}{\hat{w}_{t:t-H+1}(i_0)}
    \cdot
    e_{i_0}^\top.
\end{align}
Substitute this into \eqref{eq:r-opt} to obtain:
\begin{align}
r_i^\star =   \left(  \frac{r^\circ_t(i)}{\hat{w}_{t:t-H+1}(i_0)}
    \cdot
    e_{i_0}^\top \right) 
    \hat{w}_{t:t-H+1} = r^\circ_t(i).
\end{align}
Thus $r^\star = r_t^\circ$ yielding the cost $\|r^\star-r_t^\circ\|_\infty=0$. This is optimal since the cost must be nonnegative.  The safety filter leaves the FIR filter command unchanged in Case (A).

Next, consider Case (B).  We can  lower bound the optimal cost by noting that any
$(r,\Theta)$ feasible for
\eqref{eq:safety-filter} must satisfy:
\begin{align}
\label{eq:rinffUB}
\begin{split}
    \|r\|_\infty 
& \le \|\Theta\|_{\infty\to\infty} \cdot \|\hat{w}_{t:t-H+1}\|_\infty     \\
& \le \beta \cdot \|\hat{w}_{t:t-H+1}\|_\infty 
\end{split}
\end{align}
Equation~\ref{eq:rinffUB}, combined with triangle inequality, can be used to used to lower bound  the cost for
\eqref{eq:safety-filter}:
\begin{align}
\label{eq:CaseBLB}
\begin{split}
\|r-r_t^\circ\|_\infty  
& \ge \|r_t^\circ\|_\infty - \|r\|_\infty \\
& \ge \|r_t^\circ\|_\infty -\beta \|\hat{w}_{t:t-H+1}\|_\infty
\end{split}
\end{align}
We complete the proof by showing
that $(r^\star,\Theta^\star)$
in \eqref{eq:r-opt} and
\eqref{eq:Theta-opt}
achieve this lower bound.
Rewrite entry $i$ of~\eqref{eq:r-opt} as:
\begin{align}
\label{eq:r-opt-new}
    r^\star(i) &= c(i) \cdot \mathrm{sign}\left(r^\circ_t(i)\right)
\end{align}
where $c(i) = \min\left( |r^\circ_t(i)|, \beta |\hat{w}_{t:t-H+1}(i_0)| \right)$. We can express the cost for this $r^\star$ as:
\begin{align}
\label{eq:CaseBUB1}
    \|r^\star-r_t^\circ\|_\infty 
    &= \max_i \big| c(i)\cdot\mathrm{sign}\left( r^\circ_t(i) \right) - r^\circ_t(i) \big| \nonumber \\
    &= \max_i \big| |r^\circ_t(i)| - c(i) \big|
\end{align}
This can be simplified further based on the definition of $c(i)$:
\begin{align*}
    \|r^\star-r_t^\circ\|_\infty 
    &= \max_i \max\big\{ 0, |r_t^\circ(i)| - \beta |\hat{w}_{t:t-H+1}(i_0)| \big\}
\end{align*}
This implies that
$\|r^\star-r_t^\circ\|_\infty \le \|r_t^\circ\|_\infty -\beta \|\hat{w}_{t:t-H+1}\|_\infty$. In fact, this upper bound is achieved for at least one index $i$. Hence $(r^\star,\Theta^\star)$
in \eqref{eq:r-opt} and
\eqref{eq:Theta-opt}
achieve the lower bound~\eqref{eq:CaseBLB}
and are optimal.
\end{proof}

\vspace{0.1in}

Theorem~\ref{thm:SF-sol} provides the explicit expression of the safety filter output, i.e. robust OCO command, at each time.  This constrained OCO command imposes the scaled small gain condition (Theorem~\ref{thm:scaled-SG}) for stability and minimally alters the original OCO command.  Next, we state a simple corollary of Theorem~\ref{thm:SF-sol} that allows us to directly compute the safety filter output without explicitly computing the optimal FIR coefficients.

\vspace{0.05in}
\begin{corollary}
\label{thm:SF-update}
Let $r_t^\circ$ and $r_t$ denote the output of $F_\LTV$ and $F_\SF$ at time $t$, respectively.  The safety filter output has the following explicit expression that does not depend on the optimal adaptive FIR coefficients $\Theta^\star$.
\begin{align}
    r_t(i) =
    \begin{cases}
        r^\circ_t(i) & |r^\circ_t(i)| \le r_\mathrm{max} \\
        r_\mathrm{max}\cdot\mathrm{sign}(r^\circ_t(i)) & |r^\circ_t(i)| > r_\mathrm{max}
    \end{cases}
\end{align}
where $r_\mathrm{max} = \beta \|\hat{w}_{t:t-H+1}\|_\infty$ is the largest possible value of each element of $r_t$.
\end{corollary}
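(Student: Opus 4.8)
The plan is to obtain the corollary as a direct algebraic consequence of Theorem~\ref{thm:SF-sol}, with essentially no new content beyond a substitution and case split. Since the safety filter output is $r_t = r^\star = \Theta^\star\,\hat{w}_{t:t-H+1}$, I would write out the $i^\mathrm{th}$ entry using the row formula~\eqref{eq:Theta-opt}:
\begin{align*}
    r_t(i) = \min\big(|r^\circ_t(i)|,\,\beta|\hat{w}_{t:t-H+1}(i_0)|\big)\cdot \frac{\mathrm{sign}(r^\circ_t(i))}{\hat{w}_{t:t-H+1}(i_0)}\, e_{i_0}^\top \hat{w}_{t:t-H+1}.
\end{align*}
The central observation is that $e_{i_0}^\top \hat{w}_{t:t-H+1} = \hat{w}_{t:t-H+1}(i_0)$, so this factor cancels the denominator exactly, leaving $r_t(i) = \min\big(|r^\circ_t(i)|,\,\beta|\hat{w}_{t:t-H+1}(i_0)|\big)\cdot\mathrm{sign}(r^\circ_t(i))$.

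Next I would invoke the defining property of the index $i_0$, namely $|\hat{w}_{t:t-H+1}(i_0)|=\|\hat{w}_{t:t-H+1}\|_\infty$, to identify $\beta|\hat{w}_{t:t-H+1}(i_0)|$ with $r_\mathrm{max}$. This reduces the entry to $r_t(i)=\min\big(|r^\circ_t(i)|,\,r_\mathrm{max}\big)\cdot\mathrm{sign}(r^\circ_t(i))$, which is exactly a scalar saturation of $r^\circ_t(i)$ at level $r_\mathrm{max}$. Unfolding the $\min$ gives the two stated branches: when $|r^\circ_t(i)|\le r_\mathrm{max}$ the minimum returns $|r^\circ_t(i)|$ and $|r^\circ_t(i)|\,\mathrm{sign}(r^\circ_t(i))=r^\circ_t(i)$, while when $|r^\circ_t(i)|>r_\mathrm{max}$ the minimum returns $r_\mathrm{max}$, yielding $r_\mathrm{max}\,\mathrm{sign}(r^\circ_t(i))$. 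The closing claim that $r_\mathrm{max}$ is the largest value attainable by each $|r_t(i)|$ follows immediately, since both branches have magnitude at most $r_\mathrm{max}$, with equality on the saturated branch.

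Because the corollary is a simplification of the already-established solution, there is no genuine obstacle; the proof is a short computation. The only points requiring a moment of care are the cancellation of $\hat{w}_{t:t-H+1}(i_0)$ (which implicitly uses that $\|\hat{w}_{t:t-H+1}\|_\infty\neq 0$, so $i_0$ selects a nonzero entry and the expression is well defined) and checking the sign convention at the degenerate value $r^\circ_t(i)=0$, where $\mathrm{sign}(0)=0$ makes the first branch return $0=r^\circ_t(i)$ consistently.
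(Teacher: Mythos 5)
Your proposal is correct and matches the paper's intent: the corollary is stated as an immediate consequence of Theorem~\ref{thm:SF-sol}, and the saturation form you derive is precisely the rewriting $r^\star(i)=c(i)\cdot\mathrm{sign}(r_t^\circ(i))$ with $c(i)=\min\big(|r_t^\circ(i)|,\beta|\hat{w}_{t:t-H+1}(i_0)|\big)$ that already appears inside the paper's proof of that theorem. Your added remarks on the cancellation of $\hat{w}_{t:t-H+1}(i_0)$ and the degenerate case $r_t^\circ(i)=0$ are sound and only make the implicit steps explicit.
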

\vspace{0.1in}

At each time $t$, we can simply use Corollary~\ref{thm:SF-update} to compute the safety filter output or robust OCO command $r_t$ without explicitly computing the optimal coefficients $\Theta^\star$.  The next section illustrates the effect of the safety filter.

\section{Application to RLS}
\label{sec:application}

In this section, we revisit the motivating example in Section~\ref{sec:motivation} to illustrate the effect of the safety filter.  Here, we consider the same nominal plant $\hat{G}(z)$ and inner-loop controller $K(z)$ in \eqref{eq:ExampleG0K} with sample time $T_s=0.01$ seconds.
 
We assume an uncertainty bound of $\delta=3\times 10^{-4}$ and that the true plant $G(z)$ lies in the additive uncertainty set:
\begin{align}
    \mathcal{G}_\delta :=
        \{ G(z) = \hat{G}(z) + \Delta(z) : \|\Delta\|_{\infty\to\infty}\le\delta \}.
\end{align}
Note that $\delta=3\times 10^{-4}$ is consistent with the induced $\ell_\infty$-norm bound of the specific uncertainty used in the motivating example. We then construct 
the LFT $T_{d\to y}(M,\Delta,F)$ in Figure~\ref{fig:lft-uncertain} and solve the optimization problem~\eqref{eq:beta-opt} with $\delta=3\times 10^{-4}$.  This yields $\beta^\star=4.651$.  Next, we choose $\beta=2.8<\beta^\star$ to define the safe set $\mathcal{F}_\beta$ in~\eqref{eq:safe-set}.  This was tuned to roughly achieve the smoothest output.  The RLS-based adaptive FIR filter has filter length $H=8$, and the disturbance in~\eqref{eq:RLSdist} enters at the plant output.  Again, the disturbance has standard deviation 1.1776.  The AFDR system in Figure~\ref{fig:afdr-feedback} with the additional safety filter is simulated for 20 seconds ($t=0,\ldots,\frac{20}{T_s}$ discrete time steps), and the output is shown in Figure~\ref{fig:rls-robust}.

The top subplot of Figure~\ref{fig:rls-robust} shows the output of RLS-based AFDR with the safety filter for the nominal plant.  The AFDR is off for $t<10$ seconds, i.e. $r_t=0$ for $t=0,\ldots,\frac{10}{T_s}-1$.  The disturbance is partially attenuated by the classical controller resulting in an output standard deviation of 0.2734.  This is roughly the same as having no safety filter in Section~\ref{sec:motivation-example}.  The AFDR is on for $t\ge10$ seconds, and the disturbance is further attenuated.  During this time, the output has a standard deviation of 0.0876.  Note that the standard deviation is slightly higher than in the motivating example due to the conservativeness of the safety filter without uncertainty.  Regardless, AFDR with the safety filter still improves upon the classical controller to further cancel the disturbance.

The bottom subplot of Figure~\ref{fig:rls-robust} shows the overlapping outputs of RLS-based AFDR with the safety filter for 100 uncertain plants.  Here, the uncertain plants $\{G_i(z)\}_{i=1}^{100} \subset  \mathcal{G}_\delta $ correspond to 100 randomly generated uncertainties $\{\Delta_i(z)\}_{i=1}^{100}$ which satisfy $\|\Delta_i\|_{\infty\to\infty}\le\delta$.  Again, the AFDR is off for $t<10$ seconds, and the disturbance is partially attenuated by the classical controller.  Across the 100 uncertain plants, the output has an average standard deviation of 0.2734 (minimum of 0.2732, maximum of 0.2735) which aligns closely with the nominal performance with and without the safety filter.  The AFDR is turned on for $t\ge10$ seconds, and the disturbance is further attenuated without causing instability.  Across the 100 uncertain plants, the output has an average standard deviation of 0.0920 (minimum of 0.0798, maximum of 0.1347).  This aligns roughly with the nominal performance with and without the safety filter, illustrating that the safety filter has achieved both its objectives.

\begin{figure}[ht!]
    \centering
    \includegraphics[width=0.5\linewidth]{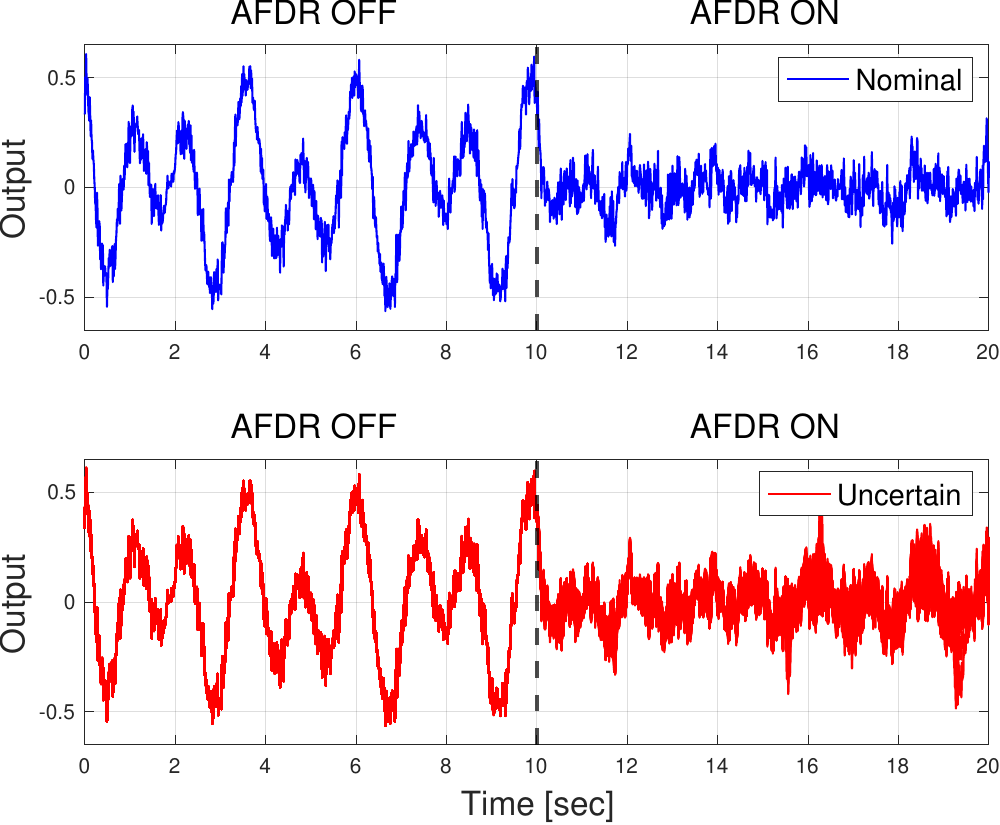}
    \caption{RLS-based AFDR with safety filter improves the disturbance rejection at the output for both the nominal plant (top) and 100 uncertain plants (bottom).}
    \label{fig:rls-robust}
\end{figure}

The effect of combining the classical controller, AFDR controller, and safety filter are summarized in Table~\ref{tab:output-sd}.  Again, the disturbance $d$ has standard deviation 1.1776 and white noise $n$ has standard deviation 0.05.  Column one corresponds to when the AFDR is off and only PID control is in effect ($t<10$). This is able to reject some, but not all of the disturbance. However, it shows good robustness to model uncertainty.  Column two corresponds to when the AFDR is turned on without safety filtering in addition to PID control ($t\ge10$).  This almost perfectly cancels the disturbance harmonics in the nominal case leaving only the effect of the white noise. However, it is sensitive to model error and can go unstable.  Column three corresponds to when the AFDR is turned on with safety filtering in addition to PID control ($t\ge10$).  The controller with the safety filter on the uncertain plants is relatively close in performance to the performance without the safety filter on the nominal plant.  Moreover, the safety filter ensures that the closed-loop remains stable even in the presences of model uncertainty.  Thus, the safety filter both maintains performance and ensures robustness.

\begin{table}[ht!]
\centering
\caption{Average Output Standard Deviation}
\label{tab:output-sd}
\begin{tabular}{|c||c|c|c|}
    \hline
        & PID & RLS-AFDR & Safety Filter \\
    \hline\hline
    Nominal (N=1)    & 0.2734 & 0.0647 & 0.0876 \\
    \hline
    Uncertain (N=100)  & 0.2734 & $+\infty$ (unstable) & 0.0920 \\
    \hline
\end{tabular}
\end{table}

\section{Conclusions}
In this paper, we present a safety filter for robust AFDR that enforces both robust stability and disturbance rejection performance.  The safety filter can be applied to systems where OCO control in the form of adaptive FIR filtering is used to improve the disturbance rejection.  We formulate the safety filter as an online optimization problem which restricts the FIR coefficients to a safe set while minimally altering the original OCO command in the $\infty$-norm cost.  We then provide an explicit solution to the optimization problem and show that the safety filter can be implemented by saturating the original OCO command without computing the optimal FIR coefficients.  Lastly, we provide a simple example to show that the safety filter ensures robustness and preserves disturbance rejection.  Future work will focus on integrating robustness and performance requirements into the online optimization used for the coefficient update as an alternative to safety filtering.

\section*{Acknowledgment}
This material is based upon work supported by the National Science Foundation under Grant No. 2347026. Any opinions, findings, and conclusions or recommendations expressed in this material are those of the authors and do not necessarily reflect the views of the National Science Foundation.

\bibliographystyle{IEEEtran}
\bibliography{Bibliography/RegretControl,Bibliography/RobControl,Bibliography/OCO,Bibliography/CBF}

\section*{Appendix}

\section*{A. AFDR Least Squares Formulation}
\label{app:AFDR}

This appendix provides details on the
least squares formulation given in Equation~\ref{eq:LeastSquares}.  
The output signal is given by $y=\hat{T} r + \hat{w}$ where $\hat{T}$ is the model of the complementary sensitivity and $\hat{w}$ is the estimate of the effective disturbance.  This can be rewritten as $y=m+\hat{w}$ with $m=\hat{T} r$.  Assume the complementary sensitivity $\hat{T}$ is modeled by the following state-space equation:
\begin{align}
\begin{split}    
    \hat{x}_{t+1} & = \hat{A} \, \hat{x}_t 
        +\hat{B} \, r_t, \,\,\, x_0=0 \\
    m_t & = \hat{C} \, \hat{x}_t 
        +\hat{D} \, r_t.
\end{split}
\end{align}
Then the signals $y$, $\hat{w}$, and $r$ can be stacked time 0 to time $t$. This gives the relation:
\begin{align}
\label{eq:LSy}
    y_{0:t} = M_1 r_{0:t} + \hat{w}_{0:t},
\end{align}
where
\begin{align}
    M_1 = \bsmtx \hat{D} & 0 & \dots & 0 \\
    \hat{C}\hat{B} & \hat{D} & \dots & 0 \\
    \vdots & \vdots & \ddots & \vdots \\
    \hat{C}\hat{A}^{t-1} \hat{B} & \hat{C}\hat{A}^{t-2} \hat{B} & \cdots 
         & \hat{D} \esmtx \in\R^{n_y(t+1)\times n_r(t+1)}.
\end{align}

The goal is to determine the reference inputs $r_{0:t}$ that would have minimized $\|y_{0:t}\|$ given the past data. More specifically, the injected reference signal is restricted to be the output of an FIR filter driven by $\hat{w}$:
\begin{align}
    r_t &= \sum_{i=0}^{H-1} \theta(i) \, \hat{w}_{t-i}.
\end{align}
Note that this is compactly expressed as $r_t=\Theta\,\hat{w}_{t:t-H+1}$ where $\Theta := \bsmtx \theta(0) & \cdots & \theta(H-1) \esmtx \in \R^{n_r\times n_wH}$.  The FIR coefficients $\Theta$ are assumed to be constant in this derivation.

The matrix $\Theta$ can be rearranged as a vector $\zeta:=\text{vec}(\Theta^\top)\in\R^{n_rn_wH}$ where $\text{vec}(\cdot)$ denotes columnwise stacking of the column vectors in $\Theta^\top$.  We can then express the FIR output as $r_t=(I_{n_r}\otimes \hat{w}_{t:t-H+1})^\top \zeta$.  The FIR relation can be stacked from times 0 to $t$ to obtain:
\begin{align}
\label{eq:LSr}
    r_{0:t} = M_2(\hat{w}_{0:t}) \, \zeta,
\end{align}
where
\begin{align}
    M_2(\hat{w}_{0:t}) = \bsmtx 
        (I_{n_r}\otimes \,\hat{w}_{0:-H+1})^\top \\
        \vdots \\
        (I_{n_r}\otimes \,\hat{w}_{t:t-H+1})^\top
    \esmtx \in\R^{n_r(t+1)\times n_rn_wH}.
\end{align}
Here we use the convention that $\hat{w}_j=0$ for $j< 0$.

Combine \eqref{eq:LSy} and \eqref{eq:LSr} to obtain the following expression for the stacked outputs:
\begin{align}
y_{0:t} = \big(M_1 M_2(\hat{w}_{0:t})\big) \, \zeta + \hat{w}_{0:t}.
\end{align}
Here $\Phi_t(\hat{w}_{0:t}):=M_1 M_2(\hat{w}_{0:t})$ contains the regressors that relate the FIR coefficients $\zeta$ to the past outputs $y_{0:t}$.
Thus the least squares problem at time $t$ is:
\begin{align}
    \min_{\zeta\in \R^{n_rn_wH}} \left\| 
     \Phi_t(\hat{w}_{0:t}) \, \zeta + \hat{w}_{0:t}
    \right\|_2.
\end{align}
This can be solved at each time $t$ via recursive least squares.  This gives the optimal FIR coefficients $\zeta^\star$ (or $\Theta^\star$ after rearranging) that would have minimized the output given the past data. The assumption is that the disturbance has some repeatable pattern such that $\Theta^\star$ will be a good choice for the FIR coefficients going into the future.

\section*{B. Offline Robust Stability Analysis}
\label{app:offline-analysis}

This appendix provides details for solving for the robust stability bound $\beta^\star$.  As mentioned in Section~\ref{sec:background}, the robust stability bound is solution to the optimization problem \eqref{eq:beta-opt} which has a system norm constraint.  In this paper, we are specifically interested in the system induced $\ell_\infty$-norm.  The optimization problem is convex for this case  ($p=\infty$) and can be formulated as a linear program (LP).

The general optimization problem is stated again here:
\begin{equation*}
\begin{aligned}
    \beta^\star =& \arg\sup_{\beta, s_1, s_2>0} & & \beta \\
    &\text{subject to} & & \|\bar{M}_{11}(s_1,s_2,\delta,\beta)\| < 1
\end{aligned}
\end{equation*}
where
\begin{align*}
    \bar{M}_{11}(s_1,s_2,\delta,\beta) &:= \bmtx \frac{1}{s_1}I & 0 \\ 0 & \frac{1}{s_2}I \emtx M_{11} \bmtx s_1\delta I & 0 \\ 0 & s_2\beta I \emtx
\end{align*}
is an $(n_u+n_w)\times(n_q+n_r)$ LTI system scaled by scalars $(s_1,s_2,\delta,\beta)$.  Note that $\delta\ge0$ is a pre-specified uncertainty level corresponding to $\|\Delta\|\le\delta$.

To simplify notation, first define the following system that includes the pre-specified uncertainty level $\delta$:
\begin{align}
    H:= M_{11} \bmtx \delta I & 0 \\ 0 & I \emtx,
\end{align}
where $H$ is also an $(n_u+n_w)\times (n_q+n_r)$ LTI system.  The optimization problem for $p=\infty$ can be rewritten as:
\begin{align}
\label{eq:beta-opt2}
\begin{split}    
    \beta^\star =& \arg\sup_{\beta, s_1, s_2>0} \,\, \beta \\
    &\text{subject to} \,\,\,\,
    \left\| \bmtx H_{11} & \left(\frac{s_2}{s_1}\right) \beta H_{12} \\
                  \left( \frac{s_1}{s_2}\right) H_{21} & \beta H_{22} \emtx
    \right\|_{\infty\to \infty} < 1,
\end{split}
\end{align}
where $H=\bsmtx H_{11} & H_{12} \\ H_{12} & H_{22} \esmtx$ is partitioned according to the block input and output dimensions.  For example, $H_{11}$ has dimensions $n_u\times n_q$.  Note that the variables $(s_1,s_2)$ only appear via the ratio $\frac{s_1}{s_2}$ and its inverse.  Thus, we can let $s_2=1$ without loss of generality.

Next, let $H_{11}(i,:)$ denote the system from all $n_q$ inputs to only the $i^{th}$ output of $H_{11}$ (where $i=1,\ldots,n_u$).  We will follow this notation to denote multiple input, single output (sub)systems.  It follows directly from the definition of the system induced $\ell_\infty$-norm that the inequality constraint in \eqref{eq:beta-opt2} can be equivalently written as follows:
\begin{align}
\label{eq:nu-constraints}
    & \left\| \bmtx H_{11}(i,:) &  \frac{\beta}{s_1} \, H_{12}(i,:) 
    \emtx \right\|_{\infty\to \infty}  < 1, \;\; \forall i=1,\ldots,n_u, \\
    & \left\| \bmtx s_1 \, H_{21}(j,:) &  \beta \, H_{22}(j,:) 
    \emtx \right\|_{\infty\to \infty}  < 1, \;\; \forall j=1,\ldots,n_w.
\end{align}
Since $s_1>0$, we can multiply both sides of \eqref{eq:nu-constraints} by $s_1$ and express the constraints as:
\begin{align*}
     \left\| \bmtx s_1 \, H_{11}(i,:) &  \beta \, H_{12}(i,:) 
    \emtx \right\|_{\infty\to \infty}  < s_1, \;\; \forall i= 1,\ldots,n_u.
\end{align*}
Furthermore, it follows again by definition of the system induced $\ell_\infty$-norm that we can express the lefthand side as the sum of norms.  Thus, we can rewrite the constraint as:
\begin{align*}
   & s_1 \big( \|H_{11}(i,:)\|_{\infty\to \infty} - 1 \big) + \beta \|H_{12}(i,:)\|_{\infty\to \infty} < 0, \;\;
   \forall i = 1,\ldots,n_u.
\end{align*}
Finally, the optimization problem for $p=\infty$ can be rewritten as following LP:
\begin{equation}
\label{eq:beta-LP}
\begin{aligned}
    \beta^\star =& \arg\max_{s_1,\beta>0} && \;\beta \\
    & \text{subject to} &&
    \bmtx
        \|H_{11}\|_{\infty\to\infty}-1 & \|H_{12}\|_{\infty\to\infty} \\
        \|H_{21}\|_{\infty\to\infty} & \|H_{22}\|_{\infty\to\infty}
    \emtx
    \bmtx s_1 \\ \beta \emtx
    <
    \bmtx 0 \\ 1 \emtx.
\end{aligned}
\end{equation}
This LP has 2 linear inequality constraints defined by the system induced $\ell_\infty$-norms of the partitions/subsystems of $H$.  The induced $\ell_\infty$-norm of a system can be computed by computing the $\ell_1$-norm of its impulse response.  Details are provided in Section 4.3 of \cite{dahleh95}.

\end{document}